\documentclass{llncs}
\usepackage[T1]{fontenc}    % Encodage des accents
\usepackage{lmodern}
\usepackage[utf8]{inputenc}
\usepackage{amsmath}        % La base pour les maths
\usepackage{amsfonts}    
\usepackage{anyfontsize}
\usepackage[svgnames]{xcolor} % De la couleur

\usepackage{graphicx} % inclusion des graphiques
\usepackage{wrapfig}  % Dessins dans le texte.
\usepackage{amssymb, bm}
\usepackage{xparse}
\usepackage{csquotes}
\usepackage{stmaryrd}
\usepackage{enumitem}
\usepackage[ruled,vlined,linesnumbered]{algorithm2e}

\SetAlgoSkip{}

\usepackage[title]{appendix}
\usepackage{subfiles} % Best loaded last in the preamble

\renewcommand{\underline}[1]{\textbf{#1}}
\renewcommand{\L}{\ensuremath{\mathcal{L}}}
\newcommand{\A}{\ensuremath{\mathcal{A}}}
\newcommand{\N}{\ensuremath{\mathbb{N}}}
\newcommand{\Z}{\ensuremath{\mathbb{Z}}}
\newcommand{\R}{\ensuremath{\mathbb{R}}}
\newcommand{\D}{\ensuremath{\mathbb{D}}}
\renewcommand{\P}{\ensuremath{\mathbb{P}}}
\newcommand{\tS}{\ensuremath{\mid\!S\!\mid}}
\newcommand{\sem}[1]{\llbracket#1\rrbracket}
\newcommand{\inter}[2]{[#1,#2[}

\newcommand{\nomdelalgo}{SAI\xspace}
\newcommand{\etal}{\emph{et al.}}

\title{Passive Learning of Symbolic Automata over Monotonic Algebras}
\author{Peter Habermehl\inst{1}\orcidID{0000-0002-7982-0946} and Erwann Loulergue\inst{2}\orcidID{0009-0006-4965-2634}}

\institute{Université Paris Cité, CNRS, IRIF, F-75013, Paris, France\\
  \email{haberm@irif.fr}
  \and {Université Paris-Saclay, CNRS, ENS Paris-Saclay, Laboratoire Méthodes Formelles, 91190, Gif-sur-Yvette, France\\
    \email{ErwannL@lmf.cnrs.fr}}
  }

\begin{document}
    \maketitle
	\begin{abstract}
  Symbolic automata extend classical finite-state automata to handle large or infinite alphabets by
labeling transitions by predicates coming from
  a boolean algebra.
  Many results from automata theory have been lifted to this model, and it has proved its usefulness for example in multiple software verification applications. Here, 
  we tackle the passive learning problem of identification in the limit, i.e. learning a model from a sample without access to an oracle to query.  
  We provide an algorithm, \nomdelalgo, that efficiently identifies in the limit symbolic automata over any monotonic algebra where predicates labeling transitions are of the form $a\preceq x \prec b$.
  The algorithm extends the RPNI framework for passive learning of finite-state
  automata to symbolic automata thanks to a new \emph{splitting} operation
  inspired by RTI, a passive learning algorithm for
  deterministic real-time automata, a subclass of timed automata.
  The learning algorithm combines merging of states
  and splitting of states allowing to infer the predicates on transitions
  in a top-down fashion.
  We prove that \nomdelalgo admits polynomial size characteristic samples.

    \keywords{Symbolic Automata \and Automata Learning \and Identification In The Limit}
\end{abstract}

    \section{Introduction}\label{introduction}
        Symbolic finite-state automata (SFA) have attracted a lot of interest recently.
In contrast to usual finite-state automata (FA) they can define languages
over an infinite alphabet. Their transitions are %not letters from a finite alphabet but
predicates coming from a boolean algebra denoting
sets of letters from a potentially infinite domain
(natural numbers, reals, etc.). Even if the alphabet is finite, it is often
useful to define predicates describing succinctly finite sets of letters
like all characters of some kind in Unicode.
There has been a flurry of papers lifting results from the
setting of FA to SFA (overviewed in \cite{DAntoniV21}).
For example, determinization, minimization, $\omega$-automata,
etc. have been considered.
SFA are used successfully in practice \cite{DAntoniV21}.

The problem of exact \emph{learning} FA and extensions is fundamental \cite{Gold67,Gold78,Angluin87}. There are
roughly two types of learning. Passive learning considers inferring an automaton
from a sample of (positive and negative) words from
an unknown \emph{target} automaton.
The learned automaton should be consistent with the sample,
i.e. accept \emph{all} positive examples and reject \emph{all} negative ones. 
Furthermore, it should %in some way
\emph{generalize}, i.e. accept more words
than the positive examples.
One would like to have a theoretical guarantee that the learning algorithm
\emph{identifies in the limit} the target automaton i.e.
given more and more examples it converges to the target automaton
at some point. % (when this happens is usually unknown).
A \emph{characteristic sample} (CS) is
a sample which guarantees that the algorithm infers the target automaton
given a sample containing at least the CS.
A class of languages (or automata representing them) is called
\emph{efficiently identifiable}, if such a (polynomial)
learning algorithm exists, with
a CS of polynomial size which can be computed efficiently.
FA are efficiently identifiable \cite{dlH10} 
and a lot of passive learning algorithms have been proposed \cite{RPNI,dlH10}.
These algorithms do not necessarily have CS but work well in practice.
A lot of algorithms are based on starting with a
\emph{prefix-tree automaton} for the sample which only accepts 
the positive words in the sample and then \emph{merging} states
in some fixed or heuristically chosen order generalizing
the language accepted by the automaton while ensuring that negative
examples are all rejected.

In active learning pioneered by Angluin \cite{Angluin87}, a learner
tries to infer a target automaton
by asking membership and equivalence queries to a teacher which knows the target.
Angluin's L$^*$ algorithm for FA has been extended
to SFA \cite{MATstar} recently.  %after several preliminary works \cite{}.
The main difficulty is that the learner 
not only must infer the structure of the automaton but also the predicates
labeling transitions. This is solved elegantly in \cite{MATstar} by
using an active learning algorithm for predicates of the boolean algebra
as subroutines in automata learning.
Coming back to passive learning of SFA considered in this paper,
here the difficulty is the same: How to combine inference of the structure
of the automaton with inference of predicates of the boolean algebra?
Passive learning for SFA has been considered very recently in~\cite{FFZ}.
Among others, they give a sufficient condition for efficient learnability of
SFA which allows one to show that SFA over any \emph{monotonic} algebra (predicates
can be seen as intervals) are efficiently learnable.
Their learning algorithm is based on the existence of a
\emph{generalization} function over the alphabet which given a set of
domain values outputs predicates in the underlying
boolean algebra generalizing them.
Then, passive learning of an SFA from a sample $S$
is done by first \emph{decontaminating} $S$,
then passively learning a DFA $A_S$ over
the concrete alphabet and then finally
generalizing $A_S$ to get an SFA.
However, the obtained SFA might not be consistent with
the sample. In that case, the algorithm just outputs the prefix-tree automaton.
, that is, 
the simplest automaton that accepts only the positively-labeled words of the sample.
There is a CS of polynomial size whose presence in the sample
guarantees identification, but it can be said that the algorithm
is only of theoretical interest as in practice there is no
guarantee that the sample contains a CS and one would like to
obtain a generalization of the sample regardless.

In a more practically oriented paper
\cite{VWW} passive learning of deterministic real-time automata (DRTA)
is considered.
DRTA are a restricted form of timed-automata where only constraints
on the time spent between two consecutive events can be expressed.
Therefore transitions are labelled with (time) intervals.
Therefore, DRTA can be seen as a special case of SFA.
The problem of inferring predicates on the transitions is elegantly solved
by starting with a prefix-tree automaton which is typically
contradictory (it contains states which should be accepting and rejecting at
the same time) with all transitions labeled by $\top$ denoting the
whole underlying domain.
Then, an operation allows to split states to 
solve some of the contradictions.
This is mixed with merging as in classical passive learning algorithms.
The order splitting and merging is done guarantees that the algorithm
will stop with typically a generalization of the sample.

In this work, we present a novel passive learning algorithm for SFA
called \nomdelalgo (Symbolic Automaton Inference)
which integrates the learning of the structure of the automaton with
the learning of the predicates. The structure of the automaton is learned
similarly to algorithms for FA, predicates are learned with a \emph{top-down}
approach by starting from an automaton where transitions accept all letters at first and splitting predicates only when necessary.
To tackle the main issue in passively learning SFA which is generalizing the behavior of the model to letters not appearing in the sample, our approach starts
from the most generic model (although contradictory) and then specifies just enough for it to become coherent.
\nomdelalgo is based on adapting the algorithm of \cite{VWW} to 
SFA. We show that our algorithm efficiently
identifies SFA over monotonic algebras.
We exhibit a CS of quadratic size in the size of
the automaton (two exponents smaller than the one from \cite{FFZ}).
%; see Section \ref{Conclusionsect} for details).
Contrary to the algorithm of \cite{FFZ} \nomdelalgo generalizes in all
situations (for infinite alphabets).
The CS is defined directly from the symbolic automaton
whereas \cite{VWW} define the CS operationally on a run of the
algorithm.

The rest of the paper is organized as follows.
In Section \ref{prelimsect} we give preliminary definitions. 
Section \ref{algosect} presents our learning algorithm and analyzes it.
We compare with related work in Section \ref{comparison} before concluding in Section \ref{futurework}.

	\section{Preliminaries}\label{prelimsect}

		\subsection{Symbolic Automata}\label{sfasect}
			Symbolic automata theory (see \cite{SFAVeanes} for a tutorial) extends classical automata theory to encompass more complex, or larger, alphabet structures. For example, a complete automaton to recognize some regular expression over a text encoded in UTF16 will need $2^{16}$ transitions from each state, which quickly becomes far too costly. Instead of using a concretely represented finite alphabet, symbolic automata employ an alphabet implicitly characterized by a Boolean algebra: this allows for handling of massive finite size %(like UTF16 with its $2^{16}$ letters)
or even infinite (\N,\Z,\dots) alphabets.
\begin{definition}
	A \underline{Boolean algebra} is a tuple $(\D,\P,\sem{\cdot},\bot,\top,\wedge,\vee,\neg)$ where $\D$ is a set called \underline{domain}, $\P$ a set of \underline{predicates} including $\top$, $\bot$, closed under boolean operations. $\sem{\cdot}: \P \to 2^{\D}$ is a \underline{semantics function} verifying: $\sem{\bot} = \emptyset$, $\sem{\top} = \D$ and $\forall \varphi,\psi \in \P$: $\sem{\varphi \wedge \psi} = \sem{\varphi} \cap \sem{\psi}$, $\sem{\varphi \vee \psi} = \sem{\varphi} \cup \sem{\psi}$, and $\sem{\neg\varphi} = \D\setminus\sem{\varphi}$.
\end{definition}
A Boolean algebra is said to be \textit{effective} when its boolean operations are computable, and predicate satisfiability is decidable.
All Boolean algebra we will consider in this document are assumed to be effective.
There are two examples of Boolean algebras we will come back to:

\textbf{The interval algebra}, whose domain is $\N$, where predicates are intervals $[a,b[$ $(a\in\N, b\in\N\cup\{+\infty\})$ and their boolean combinations with natural semantics. Similar algebras can be obtained when using $\Z$ or $\R$ instead of $\N$.

\textbf{The propositional algebra} is defined relative to a set of atomic propositions $AP = \left\{p_1,p_2,\dots,p_n\right\}$: in contains these,$\top$,$\bot$, and their boolean closure. Its domain is $\left\{0,1\right\}^{n}$, with semantics given by $\sem{p_i} = \left\{v \in \left\{0,1\right\}^{n}, v[i] = 1\right\}$. This algebra and SFA defined over it play a central role in \textit{model checking} \cite{modelchecking}.

One way of defining a Boolean algebra is by taking a set of atomic formulae, $\P_{0}$, including $\top$ and $\bot$ and setting $\P$ as its closure under boolean operations. In the interval algebra, $\P_{0}$ is the set of intervals, and in the propositional algebra, it is the set of atomic propositions and their negations.

A Boolean algebra is said to be \underline{monotonic}\footnote{This definition slightly differs from the one in \cite{FFZ} for technical reasons related to the symbols $d_{-\infty}$ and $d_{\infty}$.} when 
(1) there exists a total order $\prec$ on $\D$, (2)
there exists symbols $d_{-\infty}\in\D$ and $d_{\infty}$ (not in \D) such that for all $d \in \D$, $ d_{-\infty} \preccurlyeq d \prec d_{\infty}$,
(3) for all atomic predicates $\varphi \in \P_{0}$ there exists $a \in \D, b \in \D\cup\{d_{\infty}\}$ such that $\sem{\varphi} = \{d \in \D, a\preccurlyeq d \prec b\}$.
In a monotonic algebra, atomic predicates will simply be denoted as $\inter{a}{b}$. The interval algebra is an example of a monotonic algebra. These algebras also appear in natural settings (think for example regular expressions \verb|[a-z]|,\verb|[0-9]|, or other ranges of Unicode characters).

\begin{definition}
	A \underline{symbolic automaton} (SFA) is a tuple $\A = (\mathbb{A},Q,q_{\iota},F,\Delta)$ with $\mathbb{A}$ a Boolean algebra, $Q$ a finite set of states, $q_{\iota} \in Q$ the initial state, $F \subseteq Q$ the set of final states, and $\Delta \subseteq Q \times \mathbb{P_{\mathbb{A}}} \times Q$ a finite set of transitions, where $\mathbb{P_{\mathbb{A}}}$ is the set of predicates of $\mathbb{A}$. In a transition $(q_{1},\varphi,q_{2})$, $\varphi$ is called the \underline{guard}.
\end{definition}

\begin{wrapfigure}[8]{r}{0.28\textwidth}
	\centering
	\vspace{-\intextsep}
	\includegraphics[width=0.23\textwidth]{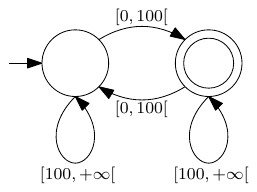}
	\caption{An example SFA}\label{SFAexample}
\end{wrapfigure}

We call letters the elements of $\mathbb{D}$, and words those of $\mathbb{D}^{*}$. An execution of $\A$ on a word $a_1 a_2 \dots a_n$ is a sequence of states $q_0 q_1 \dots q_n$ with $q_0 = q_\iota$ and, for $0\leq i < n$, $a_i \in \sem{\varphi_i}$ for a certain $\varphi_i$ such that $(q_i,\varphi_i,q_{i+1}) \in \Delta$. As usual, an execution is accepting when the last state of the sequence is in $F$, and the language $\L(\A)$ of an automaton is the set of words that label an accepting execution of $\A$.
A SFA is \underline{complete} when at each state $q$, guards of outgoing transitions from $q$ cover the whole of \D: that is, at each state, for each concrete letter there exists at least a transition. A SFA is \underline{normalized} when, given states $p$ and $q$ there is at most one transition from $p$ to $q$ (there might be one from $q$ to $p$). It is \underline{neat} when every transition's guard is a conjunction of atomic predicates. Neat automata value simple transitions at the cost of having a lot of them, while normalized automata do the inverse.
Figure \ref{SFAexample} shows an SFA over the interval algebra which is complete, neat, normalized, and recognizes the language of words over \N\ with an odd number of letters below 100.

	 	\subsection{Automata Learning}\label{learningsect}
	 		Automata learning aims to construct a model, in the form of an automaton, that explains observed behavior (either sampled or queried). In essence, we aim to learn a class of languages: regular, context-free, \ldots over an alphabet $\Sigma$, from examples, i.e. words over $\Sigma$ with a label indicating whether they are member of the target language or not. In a theoretical setting, the examples come from an oracle, also called teacher. The algorithm that formulates hypotheses is often called the learner. In practice, the sample might come from observations, or a black-box software or hardware to which we provide inputs and observe outputs. 

Multiple paradigms exist, some allowing errors and trying to minimize it and some disallowing errors entirely: this is called \emph{exact} learning, which we will focus on. In exact learning, there are two main approaches:
In \textit{active learning}, the learner can query an oracle.
Allowed queries depend on the paradigm. They can be for example
membership and equivalence queries like in Angluin's seminal work \cite{Angluin87}.
In \textit{passive learning}, the learner has no control over the data it receives. It might receive it in full as an input, or in a streaming setting. In this paper, we focus on the former case: the algorithms we discuss receive a \underline{sample} as input, which consists of words together with a label indicating whether they are a positive or negative example (that is, a subset of $\Sigma^*\times\{+,-\})$\footnote{This can also take the form of a sample $S_+$ and a sample $S_-$. We will use both conventions interchangeably.
A sample is not contradictory, i.e. $S_+ \cap S_- = \emptyset$.
}, and output one automaton. Examples of passive learning algorithms include Gold's algorithm for DFA \cite{Gold78} and the RPNI framework \cite{RPNI}. 
Measuring the success of an algorithm in this setting is not trivial: only asking that the output automaton correctly accepts all words labeled as positive examples and rejects negative ones is not enough as the language recognized might agree on the sample and still be different from the target language. For example, returning a basic automaton that accepts exactly the positive examples of
the sample (nothing more, nothing less) satisfies this condition but gives no \emph{generalization} of the sample.
One may therefore ask, given a language $\L$, what a sample should contain in order for a passive learner to identify $\L$. An answer to this question was proposed by Gold in the form of \textit{identification in the limit}, and a more general definition was given by de la Higuera \cite{dlHigueraCS}, which we follow. The idea of a \emph{characteristic sample} is that it should describe the target automaton, so that it is not only correctly identified by chance, but that additional correct examples do not confuse the learner (assuming, of course, they are correctly labeled).
%When an algorithm $A$ identifies the class of all automata in the limit, it is said to be complete: this means that for every language, there exists a sample that would make $A$ return an automaton recognizing this language. It is thus complete in the sense that it does not return only automata of a certain shape, or languages with certain properties.

\begin{definition}
	A \underline{characteristic sample} $CS$ for a target language $\L$ and a learning algorithm $A$ is an input sample $\{S_+ \subset \L, S_- \subset \D^*\setminus\L\}$ such that:
	\begin{itemize}[nolistsep]
		\item Given $CS$ as input, $A$ returns an automaton $\A$ with $\L(\A) = \L$,
		\item Given a sample $S' \supset CS$ labeled in accordance with \L, $A$ still returns an automaton with language $\L$.
	\end{itemize}
\end{definition}

\begin{definition}
	A class of automata $\mathcal{C}$ is said to be \underline{identifiable in the limit} when there exists an algorithm $A$ such that for every automaton $\A \in \mathcal{C}$, there exists a characteristic sample of $\L(\A)$.
	If these characteristic samples are of size polynomial in the size of the minimal automaton\footnote{This definition thus depends on the existence of a minimal representant in the class.} recognizing $\L(\A)$, computable in time polynomial in the size of this automaton, and the learning algorithm runs in polynomial time, the class is \underline{efficiently identified}.
\end{definition}

	\section{Learning SFAs}\label{algosect}
		In \cite{FFZ}, the authors have shown that SFAs over a monotonic algebra are efficiently identifiable in the limit, but SFAs over the propositional algebra are not efficiently identifiable unless $P=NP$. %The former fact is a positive result: still, the algorithm has cases where it only returns the \emph{prefix-tree automaton} of the given sample, a simple automaton that accepts only the positively-labeled words of the sample (essentially learning nothing). The latter shows there can be no generic algorithm for the non-monotonic case that would run in polynomial time and admit polynomial-size characteristic samples.
Learning a symbolic automaton from sampled data runs into the problem of deciding how to identify letters that are not in the sample. For example, say the sample contains 0 as a positive example and 100 as a negative one. The automaton learned could accept letters from 0 to 99, or reject all letters from 1 to 100, or accept from 0 to 49 and reject above 50, and still be consistent with the sample. Note that this is inherent to this precise situation: in active learning, further queries could resolve this issue, and in learning DFAs, the alphabet is finite (and typically small) and no line has to be drawn as usually all letters appear in the sample.

\smallskip
We define an algorithm (see Algorithm \ref{FullAlgo}), called \nomdelalgo for Symbolic Automaton Inference, to learn symbolic automata over any \textit{monotonic} algebra, inspired by RTI \cite{VWW}. RTI was defined for a subclass of timed automata, namely \textit{real-time automata}, that simultaneously recognize both a word and a sequence of numbers representing \textit{timings} of the letters from that word. In our case, there are no letters, only timings, if we consider our intervals as windows in time.

The algorithm uses a state-merging framework, also used for DFA learning: it starts with a basic automaton, the \emph{prefix-tree automaton}, then tries to merge the states. More precisely, it uses the red-blue framework \cite{EDSMintro}, where red states are states we have finished identifying, and blue states
(successors of red states) are candidates for becoming red (that might still need more work). During a run, there will be states that are neither blue nor red, and the algorithm will return when all remaining states are red.  As our sample contains both positive and negative examples, some states will be explicitly rejecting (instead of simply not accepting). An automaton with explicitly rejecting states is called \emph{augmented}, and all automata further on will be augmented with a set $R$ of rejecting states. For brevity, coloring of states is stored implicitly in the following algorithms.

At first, only the initial state is red and all guards are $\top$, so the only thing separating examples is their length.
Then, at each iteration, \nomdelalgo tries to, in this order:
\begin{enumerate}[nolistsep]
	\item merge a blue and a red state,
	\item promote a blue state to red,
	\item split a transition from a red to a blue state.
\end{enumerate}
We check that no irreversible inconsistency (see below)
is induced by such an operation : e.g., a state both accepting and rejecting cannot be colored red, and must be split into different states.

\subsection{The Symbolic PTA}\label{SPTAsect}
Usually, in state-merging algorithms, the first automaton constructed is the \emph{prefix-tree automaton}, or PTA, a simple tree-shaped automaton that accepts (all and only) the positive samples. States of this PTA then get merged and colored, constructing the automaton in a \emph{bottom-up} fashion: first all letters are separated and transitions are very specific, then merges allow for generalization. However, if we were to do the same here, all letters being separated would mean no generalization, which defeats the purpose of symbolic automata on huge or infinite alphabets. The introduction of splits allows us to use a \emph{top-down} strategy. 

\begin{wrapfigure}[9]{r}{0.4\textwidth}
	\centering
	%\vspace{-0.6\intextsep}
	\includegraphics[width=0.38\textwidth]{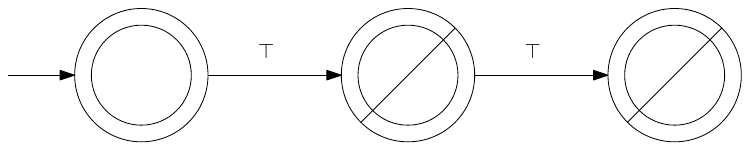}
	\caption{The SPTA for a sample with accepting and rejecting samples of length 1 and 2, and containing $\varepsilon$ as a positive example.}
	\label{SPTAfig}
\end{wrapfigure}
Thus, to create the initial automaton from a sample, called the symbolic PTA (SPTA), all guards are set to $\top$ (i.e. $[d_{-\infty},d_{\infty}[$) and the words from the sample are only distinguished by their lengths. Concretely, the \texttt{SPTA} function takes a sample as input, and outputs an augmented SFA $(\A,R)$ with as many states as the length of the longest word in the sample. The $n$-th state is accepting (resp. rejecting) when there is an $n$-letter word with positive (resp. negative) label in the sample (Figure \ref{SPTAfig}).
Intuitively, this amounts to constructing the usual-sense PTA and then
merging all states with the same depth in the tree.

\subsection{Splitting}\label{split}
\emph{Splitting} a transition $\delta = (q,\inter{a}{b},q')$ at value $c \in \inter{a}{b}$ means creating two transitions from $q$ with guards $\inter{a}{c}$ and $\inter{c}{b}$. This is the operation \cite{VWW} brought to state-merging algorithms: in the usual \emph{bottom-up} strategy, there is no need for it.
For a split operation, we will remove the targeted state and its descendants from $\A$, then split the transition $\delta$ in two, classify samples that fired the removed transition in two sets, and generate the SPTA for these sets.
Every string that fires $\delta$, can be written as $\omega = \tau\cdot t\cdot\tau'$ with $a\preceq t \prec b$. We call  $t\cdot\tau'$ the tail of $\omega$ for $\delta$, and denote it $\omega^\delta$. Every transition being split in \nomdelalgo is leading to an SPTA, whose tree structure ensures the uniqueness of this decomposition (which is not guaranteed in the general case). Denote the subsample of tails from $S$ for $\delta$ as $S^\delta$, that is $S^\delta= (\{\omega^\delta\mid \omega\in S^+ \text{ fires } \delta\},\{\omega^\delta\mid \omega\in S^-\text{ fires } \delta\})$. For a split at value $c$, denote $S^{\delta_1}$ the subset of $S^{\delta}$ with the first letter  $\prec\! c$, and $S^{\delta_2}$ those with first letter $\succeq\! c$.

Using these definitions, the splitting algorithm is given in Algorithm \ref{SplitAlgo}.

\begin{algorithm}[h]
	\DontPrintSemicolon
	\SetKwData{True}{true}\SetKwData{False}{false}
	\SetKwFunction{SPTA}{SPTA}\SetKwFunction{Merge}{Merge}
	\KwData{An augmented SFA $(\A = (\mathbb{A},Q,q_{\iota},F,\Delta), R)$, a transition $\delta =(q,\inter{a}{b},q')$ going to a SPTA, a value $c \in \inter{a}{b}$ , a sample $S$}
	\KwResult{Split $\delta$ at $c$ and update $\A$}
	\Begin{
		Remove $\delta$ from \A\;
		Remove $q'$ and all its descendants from \A\;
		Compute $S^{\delta_1}$ and $S^{\delta_2}$ \;
		$A_1 \gets$ \SPTA{$S^{\delta_1}$} , $A_2 \gets$ \SPTA{$S^{\delta_2}$} \;
		Let $q_1$, $q_2$ be the roots of $A_1$, $A_2$ respectively in:\;
		$\Delta\gets\Delta\cup\{(q,\inter{a}{c},q_1),(q,\inter{c}{b},q_2)\}$\;	
	}
\caption{Splitting a transition: $\mathtt{Split}((\A,R),\delta,c,S)$\label{SplitAlgo}}
\end{algorithm}

The computation of $S^{\delta_i}$ on line 4 can be done in polynomial time by running the automaton on the sample. However, in an actual implementation, these samples can just be stored in the states.

\subsection{Merging}
Although we are constructing guards in a top-down fashion, merging operations still occur: at the beginning, states are merged if and only if they are at the same depth in the (usual sense) PTA, thus other merges have not been considered. 

\begin{algorithm}[h]
	\DontPrintSemicolon
	\SetSideCommentRight
	\SetKwData{True}{true}\SetKwData{False}{false}
	\SetKwFunction{Split}{Split}\SetKwFunction{Merge}{Merge}
	\KwData{An augmented SFA $(\A = (\mathbb{A},Q,q_{\iota},F,\Delta), R)$, two states $q,q' \in Q$ with $q'$ not being red, a sample $S$}
	\KwResult{Update \A\ by merging $q$ and $q'$}
	\Begin{
		Add a new state $q_n$ to \A \;
		\lIf{$q\in F$ or $q'\in F$}{$F \gets F \cup \{q_n\}$}
		\lIf{$q\in R$ or $q'\in R$}{$R \gets R \cup \{q_n\}$}
		\If{there exists a transition $\delta'$ from $q'$}{
			\ForAll{$\delta = (q,\inter{a}{b},\_)$ from $q$, by increasing $a$}{
				\lIf{$b\neq d_\infty$}{\Split{$(\A,R),\delta',b,S$} and $\delta'\gets (q',\inter{b}{d_\infty},\_)$ }
				}
			}
		\ForAll{$\delta = (q_1,\varphi,q_2) \in \Delta$ \tcp*{(replace $q$ and $q'$ by $q_n$ in transitions)}}{ 
			\lIf{$q_1=q$ or $q_1=q'$}{$\delta\gets (q_n,\varphi,q_2)$}
			\lIf{$q_2=q$ or $q_2=q'$}{$\delta\gets (q_1,\varphi,q_n)$}
		}
		\While{$\Delta$ contains two transitions $(q_n,\varphi,q_1)$ and $(q_n,\varphi,q_2)$ with the same guard such that $q_2$ is not red}{
			\Merge{$(\A,R),q_1,q_2,S$}\;
		}
		Remove $q$ and $q'$ from $\A,F,R$\;
		}
\caption{Merging two states: $\mathtt{Merge}((\A,R),q,q',S)$\label{MergeAlgo}}
\end{algorithm}

Merging two states is recursive: successors of the two states (and their successors, etc.) are merged to maintain determinism in the automaton. When this function is called from the main algorithm, it is always with a blue state and a red one: in recursive calls, this might not be the case, but one of the two states will always be guaranteed to not be red. See Figure \ref{MergeFig} for an illustration.

\begin{figure}[t]
	\centering
	\includegraphics[width=0.8\textwidth]{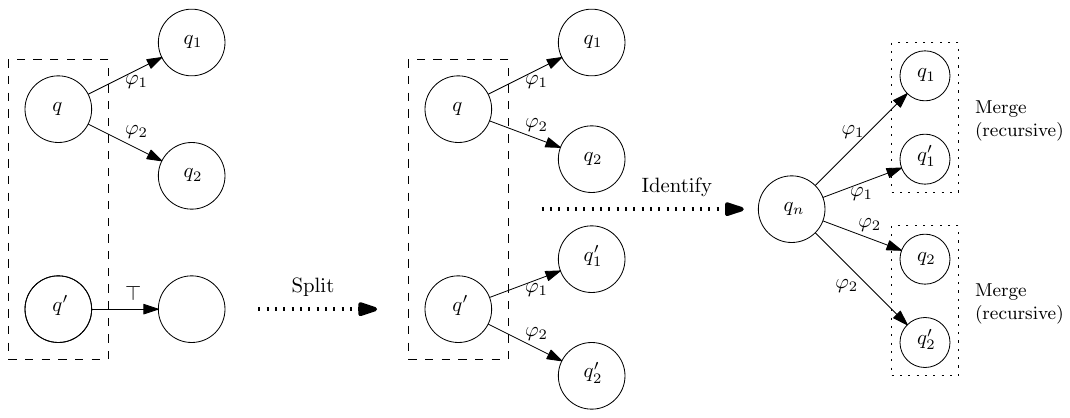}
	\caption{The steps of a merge: splits, identification, then recursive calls}
	\label{MergeFig}
\end{figure}

Algorithm \ref{MergeAlgo} creates a fresh state $q_n$ replacing both $q$ and $q'$ in all appearances in $\Delta$. It is rejecting (resp. accepting) when one of the two is (lines 2-4), and takes the ``highest'' (red>blue>uncolored) color among the two states it replaces. The loop on lines 6 and 7 relies on the fact that $q'$ is not red: when it has an outgoing transition, its guard is $\top$: repeatedly splitting it according to outgoing transitions of $q$ ensures that outgoing transitions from both states are the same (as the outgoing transitions from any state form a partition of \D). Then, the replacement and recursive merges occur on lines 8 through 12.

\paragraph{Checking Inconsistency}\label{CheckInconsistentSect}
During the run of \nomdelalgo, some states are \textit{inconsistent}, i.e. both accepting and rejecting, or two words of the sample reach the same state with the same suffix remaining, but are labeled differently (the former is a special case of the latter with the suffix being the empty word).
This is not a problem and
can be fixed by further splits. However, such a state cannot be colored red, as this would be an \textit{irreversible inconsistency}. Thus, when \nomdelalgo tries the different operations, it tests if the obtained automaton can still be made consistent with the sample without modifying red states. If not, the operation needs to be undone (see Algorithm~\ref{FullAlgo}).

\begin{algorithm}[h]
	\DontPrintSemicolon
	\SetKwData{True}{true}\SetKwData{False}{false}
	\SetKwFunction{Split}{Split}\SetKwFunction{Merge}{Merge}\SetKwFunction{SPTA}{SPTA}
	\KwData{An augmented SFA $(\A = (\mathbb{A},Q,q_{\iota},F,\Delta), R)$, a sample $S$}
	\KwResult{\True iff $(\A,R)$ can still be made consistent with $S$}
	\Begin{
		
		\ForAll{red states $q \in Q$}{
			\lIf{$q\in F$ and $q\in R$}{\Return{\False}}
			\ForAll{$(q,\varphi,q') \in \Delta$ such that $q'$ is not red}{
			Let $S^\delta = (S^\delta_+,S^\delta_-)$ be the set of tails of $S$ for $\delta$\;
			\lIf{$S^\delta_+\cap S^\delta_-\neq\emptyset$}{\Return{\False}}
			}
		}
		\Return{\True}
		}

\caption{\label{CheckInconsistentAlgo}\begin{small} Checking for irreversible inconsistency: $\mathtt{Consistent}(\A,R,S)$ \end{small}}
\end{algorithm}

\subsection{The \nomdelalgo Algorithm}

Algorithm \ref{FullAlgo} below contains 
our passive learning algorithm for symbolic automata
\nomdelalgo.
Let us define as the \textit{shortlex blue state}, and denote as $q_b$, the blue state reached by the smallest string in shortlex order (shorter strings before longer strings, and strings of the same length ordered by lexicographic order). At each step, \nomdelalgo first tries to merge it into each red state, then to color it, and if it can't do either, splits it. See Figure \ref{exemple} for an example run of the algorithm.

\begin{algorithm}[h]
	\DontPrintSemicolon
	\SetKwData{True}{true}\SetKwData{False}{false}
	\SetKwFunction{Split}{Split}\SetKwFunction{Merge}{Merge}\SetKwFunction{SPTA}{SPTA}\SetKwFunction{Consistent}{Consistent}\SetKwFunction{Partition}{Partition}
	\KwData{A sample $S$}
	\KwResult{An SFA consistent with $S$}
	\Begin{
		$(\A = (\mathbb{A},Q,q_{\iota},F,\Delta),R)\gets$ \SPTA{$S$}\;
		Color the initial state $q_\iota$ of \A\ red\;
		
		\While{\A\ contains non-red states}{
				\lForAll{$(q,\varphi,q') \in \Delta$ such that $q$ is red and $q'$ is not}{Color $q'$ blue}
				Let $q_b$ be the shortlex blue state \;
				\ForAll{red states $q \in Q$}{
					\Merge{$(\A,R),q,q_b,S$}\;
					\leIf{\Consistent{$\A,R,S$}}{skip}{Undo the merge}
				}
				Color $q_b$ red\;
				\leIf{\Consistent{$\A,R,S$}}{skip}{Undo the coloring}

				Let $\delta$, $S^\delta$ be the transition leading to $q_b$ and its set of tails \;
				\ForAll{tails $t\cdot\tau \in S^\delta$ (picked by increasing $t$)}{
					\Split{$(\A,R),\delta,t,S$}\;
					\eIf{\Consistent{$\A \text{ with the new shotlex state colored red},R,S$}}
					{
						Undo the split\;
					}{
						Undo the split and Redo the previous split tried\;
						skip
					}
				}
		}
		\Return{\A}
		}
\caption{The \nomdelalgo algorithm\label{FullAlgo}}
\end{algorithm}

By skip, we mean go to the next iteration of the line 4 while loop.
On line 3, as we assume the sample does not contain a word both positively and negatively, the initial state (corresponding to $\varepsilon$) cannot be both accepting and rejecting, so it is colored right away.
In lines 13 and onward, our method of finding the right split is that we are trying to find the split with the largest value such that the resulting new shortlex blue state can be colored red. Thus, as long as splitting and coloring the new $q_b$ red doesn't create an inconsistency, we keep trying bigger and bigger splits. The first one that fails tells us that the previous was the right one (line 18).

In the example run of Figure \ref{exemple}, only the action that is actually performed is shown at each step. For example at the very first step, the only blue state is both accepting and rejecting, so coloring it red, or merging it into the red state, would create an irreversible inconsistency: thus, the only possible action is to split the transition leading to it. Splitting at value 0 works, but splitting at 100 does also, while any larger value would not: thus, the split at 100 is performed.

In the coming section \ref{propsect}, we show that \nomdelalgo runs in polynomial time and returns a SFA consistent with its input sample and we prove the main result of this paper, stating that there exist polynomial characteristic samples for \nomdelalgo, computable in polynomial time. That is, \emph{\nomdelalgo efficiently identifies in the limit the class of SFAs over monotonic algebras.}
\begin{figure}[!h]
\centering
	\includegraphics[width=0.8\textwidth]{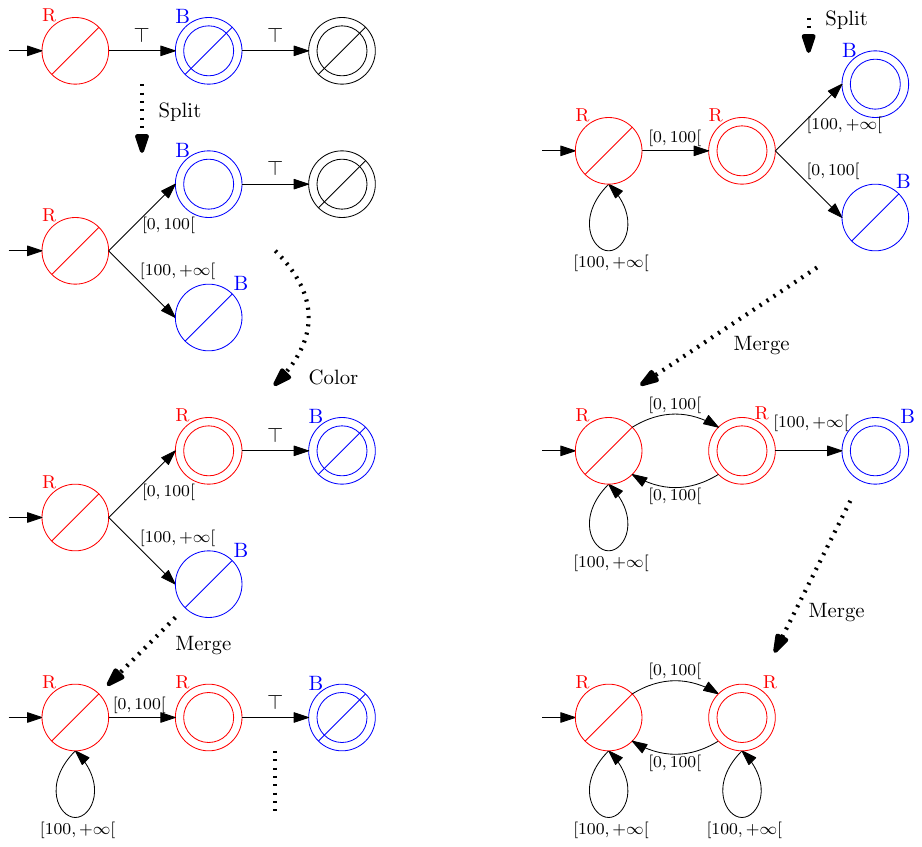}
    
	\caption{An example run of \nomdelalgo for the sample: $\{ (\varepsilon,-), (0,+), (100,-), (0\cdot 0,-),$ $(0\cdot~100,+) \}$ over the interval algebra. Capital letters next to states denote their color (Blue, Red) when they have one.}
	\label{exemple}
\end{figure}

Let us detail the characteristic sample here. Let $\L_t$ be any SFA language, and $\A_t$ the minimal-state neat SFA recognizing it (it exists and is unique \cite[Lemma 5.7]{FFZ}).
Given a state $q \in Q_t$, denote its shortlex access word by $\tau_q$. Given $q,q'$ two distinct states, denote as $\delta_{q,q'}$ a distinguishing word between the two (that is, a word accepted from $q$ iff it is rejected from $q'$). These words' existence is guaranteed by the minimality of $\A_t$; they are all of length linear in $|Q_t|$, and can be computed in polynomial time.
The characteristic sample is then defined as: 
	\[\bigcup_{q_1 \in Q_t}\bigcup_{\substack{a \in \D \text{ s.t.}\\ \exists(q_1,\inter{a}{b},q_2) \in \Delta_t}} \bigcup_{q \neq q_2} \tau_{q_1}\cdot a\cdot\delta_{q_2,q} \cup \bigcup_{q,q' \in Q_t^2} \tau_q\delta_{q,q'} \]

		\section{Properties of \nomdelalgo}\label{propsect}
	 	
In this section, we first show some basic properties of the \nomdelalgo algorithm, namely termination (in polynomial time) and correction. Then, we prove that \nomdelalgo admits characteristic samples for all SFA languages (over monotonic algebras). The fact that these sets can be computed in polynomial time, together with lemma \ref{polytime}, means that \nomdelalgo learns SFA over monotonic algebras \textit{in the limit}.

\begin{lemma}\label{polytime}
	\nomdelalgo terminates in time polynomial in the size $\tS = \sum_{\tau \in S} \mid\!\tau\!\mid$ of the input sample $S$.
\end{lemma}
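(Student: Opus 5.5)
We bound the number of iterations of the main while loop and the cost of each iteration. The key quantity is the number of states of the automaton, which we relate to $\tS$. The \texttt{SPTA} of a sample has at most $1+\max_{\tau\in S}|\tau|\le 1+\tS$ states. After any sequence of operations, each non-red state lies in an SPTA hanging off a red or blue state. Every state of the current automaton is reached by some prefix of a word of $S$, except possibly states created by splits that no word reaches. However, each split creates only the roots of two SPTAs, and a split is only performed at a letter $t$ occurring in $S$. We will therefore prove the invariant that every state is reached by at least one prefix of a sample word, or is a root created by a split. This gives $|Q| = O(\tS)$ at all times. Moreover, each transition guard is an interval whose endpoints are letters of $S$ or $d_{-\infty},d_\infty$, so the number of transitions is also $O(\tS)$.

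Next we bound the number of iterations. We will show that each iteration of the while loop performs exactly one effective action on the shortlex blue state $q_b$. The action is a merge into a red state, a recoloring to red, or a split of the transition leading to $q_b$. A merge or a coloring strictly decreases the number of non-red states, or strictly increases the number of red states. A split refines the guard entering $q_b$ at a letter of $S$. To obtain a potential function, we take the number of red states plus the number of distinct (red state, interval boundary from $S$) pairs used in guards out of red states. Red states are never uncolored, since only non-red states are merged away or removed. Both quantities are bounded by $O(\tS)$ and $O(\tS^2)$ respectively. Each iteration increases this potential, or decreases the number of non-red states without decreasing it. We must argue that a split is always eventually followed by progress: the new shortlex blue state after the chosen split can be colored red. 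At worst, splitting at the smallest tail letter isolates a single letter class. This yields a polynomial number of iterations.

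Finally, we bound the cost per iteration. There are at most $|Q|$ attempted merges, one coloring attempt, and at most $|S^\delta|\le \tS$ attempted splits. Each attempt is followed by a call to \texttt{Consistent}, which runs the sample through the automaton in time polynomial in $\tS$. Undo can be implemented by copying the automaton beforehand, at cost $O(\tS)$. A single \texttt{Merge} call recurses, but each recursive call removes at least one non-red state. Each internal \texttt{Split} only rebuilds SPTAs from subsamples, in time $O(\tS)$. Hence a merge costs polynomial time. Multiplying everything together gives a polynomial bound.

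The main obstacle is the progress argument for splits, i.e.\ ruling out an infinite sequence of undo/redo split attempts that never make $q_b$ red. This is also where we must check that the recursive \texttt{Merge} terminates, since its internal splits create new states. For the latter, we would argue that internal splits occur only along non-red SPTA subtrees, so the set of sample prefixes reaching non-red states is partitioned more finely but never grows; the invariant relating states to sample prefixes then caps the total work. For the former, we would show that splitting at the first tail letter yields a blue state whose tails all begin with the same letter class. Such a state passes \texttt{Consistent} whenever no two sample words share a suffix with conflicting labels, which the sample guarantees by non-contradiction.
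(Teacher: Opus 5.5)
Your outline matches the paper's: polynomially many iterations, each trying polynomially many polynomial-time operations. The step that bounds the number of iterations, however, fails for merges. You assert that a committed merge strictly decreases the number of non-red states, and it need not. \texttt{Merge} first splits the outgoing $\top$-transition of $q_b$ at every boundary of the red state $q$ and rebuilds one SPTA per resulting guard. Each rebuilt SPTA is then merged into the child of $q$ with the same guard, and when that child is a shallow non-red state, the merged state inherits the whole rebuilt chain. For instance, take two blue leaves below $q$ on guards $[d_{-\infty},50[$ and $[50,d_\infty[$, together with $q_b$ and a chain of four states below it carrying tails $10\cdot1\cdot1\cdot1$ and $60\cdot1\cdot1\cdot1$. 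These are $7$ non-red states before the merge and $8$ after it. Merges leave your potential (red states plus red boundary pairs) unchanged, so nothing in your argument bounds the number of merge iterations. A quantity that does work, and that the paper's charging of operations to sample words implicitly uses, is the set of sample prefixes reaching red states. Redness is never lost (a merge passes it to $q_n$, and a split only deletes non-red subtrees), so this set only grows. Each merge or coloring adds the prefixes reaching $q_b$, and the set has at most $|S|+1$ elements.

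Your split-progress argument also does not go through as written. Guards are half-open, so splitting at the smallest first letter $t_{\min}$ of the tails leaves $[a,t_{\min}[$ with no tails. The checked state then passes trivially, but the other part is just $q_b$ again, so nothing is gained. The split that isolates one letter class is at the second smallest first letter $t_2$. It exists because if all tails of $q_b$ shared a first letter, the coloring step would already have succeeded, for the reason below.

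That reason is not non-contradiction of $S$. The condition $S_+\cap S_-=\emptyset$ permits $p_1\cdot t\cdot\tau\in S_+$ and $p_2\cdot t\cdot\tau\in S_-$ with $p_1\neq p_2$ both reaching the red source of $\delta$. What excludes this is the invariant that \texttt{Consistent} held after the last committed operation. Since $\delta$ goes from a red state to the non-red $q_b$, that check gives $S^\delta_+\cap S^\delta_-=\emptyset$. The acceptance status of the new state, and the tails of its outgoing transition, come from the $t_{\min}$-class of $S^\delta$ by stripping the common first letter. The other new transition's tails form a subset of $S^\delta$. So both are conflict-free.

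Hence the committed split has two non-empty parts, and the next iteration merges or colors its lower part, giving at most $2(|S|+1)$ iterations, essentially the paper's $2|S|$. Here the committed split must be read as the largest passing one, as in the paper's example; the literal pseudo-code commits nothing when every tried split passes. Finally, ``each recursive call removes a non-red state'' does not prove that \texttt{Merge} terminates, because its internal splits create states. Use instead that each recursive call descends one level into the rebuilt SPTA of $q_b$'s tails, so the recursion depth is at most the length of the longest word.
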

\begin{proof}
	First, the construction of the SPTA is clearly polynomial in the size of the sample.

	Then, each operation takes polynomial time:

	For splitting, the sets of suffixes can be constructed in polynomial time by running $\A$ on the relevant part of the sample. The two SPTA constructions then take time polynomial in the size of these.

	For merging, some splits might be necessary in the outgoing transitions from the blue state, but as each one is due to a state reached by a word of the sample, the number of splits is bounded by the size of the input sample.

	Promoting a state takes constant time. 
	
	Thus, we need to show that a polynomial number of iterations takes place, and that in each of these, the choice of an operation takes polynomial time.

	Each split tried is due to a tail of the sample: there are $\tS$ of those (one for each position of each word). Similarly, each merge takes place with a red state reached by some word, thus there are only $\tS$ to try.

	When a state has been promoted, two states have merged, or a split has separated two words, this operation will not take place again.
	Thus, the number of iterations of the algorithm is bounded by $2\!\tS$, and at each one, at most $2\!\tS$ operations will be tried: this concludes the proof.
\end{proof}
\begin{lemma}
	Given an input sample $\{S_+,S_-\}$, \nomdelalgo returns a neat SFA $\A$ consistent with the input sample: that is, $S_+\subset\L(\A)$ and $S_-\subset\L(\A)^{\mathsf{c}}$
\end{lemma}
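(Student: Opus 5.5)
The plan is to prove three things: an invariant that holds throughout the run, that the output is neat, and that the output is consistent once the loop exits. The invariant $(I)$ says that after each completed iteration of the main loop, the current augmented automaton $(\A,R)$ satisfies all of the following.

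\begin{itemize}
\item It is deterministic and complete.
\item Every guard is a single atomic predicate $\inter{a}{b}$.
\item Every word of the sample has a run ending in a state of $F$ if positive, and in a state of $R$ if negative.
\item $\mathtt{Consistent}(\A,R,S)$ returns true.
\end{itemize}

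The base case is the SPTA. All guards are $\top=\inter{d_{-\infty}}{d_\infty}$, each state has exactly one outgoing transition, and every word of length $n$ reaches the $n$-th state, which is marked accepting or rejecting according to the word's label. Only $q_\iota$ is red, and it is not both accepting and rejecting because $S_+\cap S_-=\emptyset$. Its outgoing transition leads to a non-red state, and the tails for that transition are just the nonempty words of $S$, so they are not contradictory.

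For the inductive step, each operation is checked separately.

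\emph{Split.} Splitting $\inter{a}{b}$ at $c$ yields $\inter{a}{c}$ and $\inter{c}{b}$. These are still atomic, disjoint, and together cover $\inter{a}{b}$, so determinism and completeness are preserved. The words that fired $\delta$ are redistributed into SPTAs built on $S^{\delta_1}$ and $S^{\delta_2}$. The tail decomposition is unique because $\delta$ leads into an SPTA. Hence each word reaches a final or rejecting state matching its label.

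\emph{Merge.} The loop on lines 6--7 only splits the $\top$-guarded transition of the non-red state at endpoints $b$ of the other state's guards. So guards stay atomic, and afterwards both states have identical partitions of $\D$. The recursive merges keep determinism. The new state inherits the acceptance and rejection marks of both merged states, so runs of sample words still end in correctly marked states (possibly doubly marked).

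\emph{Promotion and undo.} Promotion does not change the automaton. Any operation after which $\mathtt{Consistent}$ fails is undone. The split chosen on line 18 is one that was checked consistent with the new $q_b$ colored red. So $(I)$ holds at the end of each iteration.

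By Lemma~\ref{polytime} the algorithm terminates, and at exit all states are red. Neatness then follows from $(I)$ directly, since every guard is atomic, hence a conjunction of atomic predicates.

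For consistency, $\mathtt{Consistent}$ returning true with all states red means no state lies in $F\cap R$. Take $w\in S_-$. By $(I)$ its unique run ends in a state of $R$, which is therefore not in $F$. By determinism, $w\notin\L(\A)$. Every $w\in S_+$ ends in $F$, so $w\in\L(\A)$.

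The main obstacle is progress: showing that each iteration actually does something, so the loop does not stall with a blue state that can be neither merged, promoted nor split. I would argue as follows.

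Suppose $q_b$ cannot be promoted. Then either $q_b\in F\cap R$, or some outgoing transition of $q_b$ has contradictory tails. In the first case, words of different labels end at $q_b$. Since $q_b$ hangs in an SPTA, these words differ in a letter along the transition $\delta$ into $q_b$, after which all depths coincide. In the second case, words with different labels share a tail from $q_b$, and again must differ at the letter read on $\delta$.

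Either way, there is a splitting value $t$ among the first letters of $S^\delta$ that separates them. The smallest such split, placed at the next tail value, isolates the $\inter{a}{t}$ branch, on which the conflict is reduced. I would make this precise by induction on the number of distinct first letters of $S^\delta$. If $S^\delta$ has a single first letter $t_0$, I need to show the conflict cannot arise. This holds because identical prefixes leading to an SPTA with identical tails would give the same word with two labels, contradicting $S_+\cap S_-=\emptyset$.

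Together with the fact that an unchanged split loop still yields some consistent split, this guarantees each iteration strictly increases the number of red states or strictly refines guards. So the loop ends with all states red, and the consistency argument above applies.
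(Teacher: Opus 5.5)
Your proof is correct and follows the paper's route: every kept operation passes $\mathtt{Consistent}$, so when the loop exits (Lemma~\ref{polytime}) all states are red and none lies in $F\cap R$, while the SPTA marks, preserved by splits and merges, send each sample word to a correctly marked state; your atomic-guard invariant also makes explicit the neatness claim that the paper's proof leaves implicit. Two blemishes do not affect this. First, the SPTA (and the output) need not be complete, because the deepest state has no outgoing transition; you never actually use completeness, only determinism and the fact that a state's outgoing guards, when there are any, partition $\D$. Second, your closing progress paragraph is redundant given Lemma~\ref{polytime}, and it misjustifies the single-first-letter case: different prefixes can reach the same red state, so what rules out the conflict is $\mathtt{Consistent}$ on $\delta$, not $S_+\cap S_-=\emptyset$.
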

\begin{proof}
	First, $S_+\cap S_-=\emptyset$, so the coloring of the initial state does not make $\A$ inconsistent. Then, all coloring and merging are only done if they create no irreversible inconsistency ; thus, during the run, $\A$ is never irreversibly inconsistent.
	
	When the algorithm terminates, all states are colored red. At this point, there is no state that is both accepting and rejecting (that would be a irreversible inconsistency), and by construction of the SPTA, every positive sample ends in an accepting state and every negative sample ends in a rejecting state.
	As the algorithm does terminates by the previous lemma, this concludes the proof.
\end{proof}
\begin{theorem}
	There exists polynomial characteristic samples for \nomdelalgo, computable in polynomial time.
\end{theorem}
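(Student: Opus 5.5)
We take the sample $CS$ defined just before this section for the minimal neat SFA $\A_t$. We first check its size and computability, then show by induction over the iterations of Algorithm~\ref{FullAlgo} that \nomdelalgo, run on any sample $S\supseteq CS$ labelled consistently with $\L_t$, rebuilds $\A_t$ exactly.

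\textbf{Size and computability.} The second union contributes $|Q_t|^2$ words. The first contributes at most $|\Delta_t|\cdot|Q_t|$ words, since each left endpoint $a$ of an atomic guard gives one word per state $q\neq q_2$. Access words $\tau_q$ are found by a shortlex BFS on $\A_t$, and for each edge we pick its left endpoint as the concrete letter. Distinguishing words of length $O(|Q_t|)$ come from a product construction with emptiness checks in the effective algebra. The total size is therefore polynomial ($O(|Q_t|^2|\Delta_t|)$ words of linear length), and $CS$ is computable in polynomial time.

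\textbf{Correctness invariant.} After each iteration of the main loop there is an injective map $h$ from red states to $Q_t$ such that:
\begin{itemize}[nolistsep]
\item every red-to-red transition $(p,\inter{a}{b},p')$ is an actual transition $(h(p),\inter{a}{b},h(p'))$ of $\A_t$;
\item every transition from a red state to a non-red state has a guard of the form $\inter{a}{b}$, where $a$ is an endpoint of a true guard of $h(p)$ and $b$ is either $d_\infty$ or a later true endpoint.
\end{itemize}
In other words, guards only get refined from the left, up to the true partition. We also maintain that the shortlex blue state $q_b$ is reached by $\tau_{h(p)}\cdot a$ for some red $p$, where $a$ is the left endpoint of a true interval.

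Each blue state $q_b$ then falls into one of three cases.
\begin{enumerate}[nolistsep]
\item \emph{Its target $q_2$ in $\A_t$ equals $h(p)$ for some red $p$.} The merge with $p$ must be accepted, and no earlier merge may succeed. For any red $p'$ with $h(p')=q\neq q_2$, the words $\tau_{q_1}a\,\delta_{q_2,q}$ and $\tau_q\delta_{q,q_2}$ both lie in $CS$. After the merge, $\delta_{q_2,q}$ (or its counterpart) yields two tails with the same suffix but opposite labels on a red-reachable path, so \texttt{Consistent} fails.
\item \emph{Its target is not yet red.} All merges fail for the same reason, and coloring succeeds exactly when the guard is already correct.
\item \emph{Its guard $\inter{a}{d_\infty}$ strictly contains the true interval $\inter{a}{b}$.} Coloring fails because of the words $\tau_{q_1}b\,\delta_{\cdot,\cdot}$ together with the $a$-words. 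The split search over increasing tail letters $t$ succeeds for all $t\le b$, and the split at the first tail letter $t>b$ fails.
\end{enumerate}
In case 3 we must also check two things. First, letters between $a$ and $b$ never cause a failure: any extra sample words are labelled per $\L_t$, so their tails agree with those in the same true class. Second, the split chosen is exactly at $b$, which is guaranteed because $b$ appears as the tail letter of the $CS$ words for the next interval.

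Termination with all states red then gives $h$ bijective onto the reachable part of $\A_t$, with identical guards. Hence $\L(\A)=\L_t$, and this holds for every superset of $CS$.

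\textbf{Main obstacle.} The hardest step is the split analysis. We have to show that \texttt{Consistent} detects a conflict exactly when a split overshoots a true boundary, and never before, even with arbitrary extra sample words and non-red SPTA subtrees. The subtle point is that \texttt{Consistent} only examines tails leaving red states, so the argument must rely on the distinguishing words $\delta_{q_2,q}$ being present with both labels from the single merged blue state. I would establish this with a lemma: if a non-red state is reached by two sample prefixes that lead to different states of $\A_t$ and that both appear in $CS$ with their distinguishing suffixes, then \texttt{Consistent} returns false. With that lemma in hand, I expect the case analysis above to go through.
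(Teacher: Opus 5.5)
Your route is the paper's: the same characteristic sample and the same per-transition analysis. Coloring and every merge of $q_b$ are blocked by $\tau_{q_1}a\,\delta$ versus $\tau_{q_1}b\,\delta$, the split search stops at $b$, and wrong merges are blocked by $\tau_q\delta_{q,q_2}$. Your invariant with $h$ usefully sharpens what the paper leaves implicit.

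The genuine gap is the lemma you propose for the main obstacle. As stated, about a \emph{non-red} state, it is false, and it contradicts your own case~3. \texttt{Consistent} only tests red states for membership in $F\cap R$, and tails on transitions leaving red states for non-red ones. Non-red SPTA parts are deliberately allowed to stay contradictory. Take $h(p)=q_1$ with true intervals $\inter{a}{b}\to q_2$, $\inter{b}{c}\to q'$, $\inter{c}{e}\to q''$ (consecutive, so $q'\neq q''$). After a split at $t\preceq b$, the non-red root of the upper half $\inter{t}{d_\infty}$ is reached by $\tau_{q_1}b$ and $\tau_{q_1}c$, which lead to different states of $\A_t$. Moreover $CS$ contains $\tau_{q_1}b\,\delta_{q',q''}$ and $\tau_{q_1}c\,\delta_{q'',q'}$. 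Your lemma then says \texttt{Consistent} fails. Yet the split at $b$ must (and does) pass, because on the transition from $p$ into that root the tails begin with $b$ and $c$ and are therefore distinct.

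What you need is the statement for the state the tentative operation has just made red: $q_n$ after a merge, $q_b$ when colored, or the new shortlex state in the split test. If sample words $uw$, $vw$ with opposite labels have $u,v$ reaching the same red state $r$, then \texttt{Consistent} returns false. The reason is that $w$ is read deterministically from $r$, so the conflict surfaces either at a red state in $F\cap R$ or as equal tails on the first red-to-non-red transition. This direction needs the suffixes to coincide, i.e.\ $\delta_{q,q'}=\delta_{q',q}$, so choose distinguishing words symmetric, as the paper tacitly does. For the ``never before'' half you also need the converse. If every sample prefix reaching each red $r$ reaches $h(r)$ in $\A_t$, then \texttt{Consistent} returns true. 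This is exactly where the restriction to red states is used.

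Two further points. First, your ``hence $\L(\A)=\L_t$'' skips the acceptance condition. SAI puts a state in $F$ only if a positive sample word ends there. You must argue, as the paper does, that unless $\A_t$ has one state, each $q$ has some $q'$ with $\delta_{q,q'}=\varepsilon$, so every $\tau_q$ is in the sample with its label. Second, when $b$ is the largest tail letter, no split ``fails''. Like the paper's example run, you must read the loop as keeping the last consistent split, since the literal pseudocode undoes it.
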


\begin{proof}
	Let $\L_t$ be any SFA language, and $\A_t$ the minimal-state neat SFA recognizing it (it exists and is unique by \cite[lemma 5.7]{FFZ}).

	Notice that the fixed order between operations makes it so that, at every iteration, only the transition leading to $q_b$ with the smallest (in terms of $\prec$) guard is considered, and words in the sample not taking this transition will have no impact. With this in mind, we show that a few words allow us to identify every transition in $\A_t$ ; the union of these over all transitions is thus a characteristic sample for $\L_t$.

	When we say that a given word is in the sample, it is implied that it is added with its corresponding label (depending on whether it is in $\L_t$ or not).

	Given a state $q \in Q_t$, denote $\tau_q$ its shortlex access word. Given $q,q'$ two distinct states, denote as $\delta_{q,q'}$ a distinguishing word between the two (that is, a word accepted from $q$ iff it is rejected from $q'$). These words' existence is guaranteed by the minimality of $\A_t$ ; they are all of length linear in $\mid\!Q_t\!\mid$, and can be computed in polynomial time.

	Our sample is defined as: 
	\[\bigcup_{q_1 \in Q_t}\bigcup_{\substack{a \in \D \text{ s.t.}\\ \exists(q_1,\inter{a}{b},q_2) \in \Delta_t}} \bigcup_{q \neq q_2} \tau_{q_1}\cdot a\cdot\delta_{q_2,q} \cup \bigcup_{q,q' \in Q_t^2} \tau_q\delta_{q,q'} \]

	Note that for any transition out of a fixed state $q_r$ with guard $\inter{a}{b}$ and target $q$, there is also an outgoing transition with guard $\inter{b}{c}$ going to another state $q' \neq q$ (it exists because $\A_t$ is complete, and goes to a different state because it is minimal). Thus, $\tau_{q_r}\cdot a \cdot\delta_{q,q'}$ and $\tau_{q_r}\cdot b \cdot\delta_{q,q'}$ are both present in the sample, \emph{with one being accepted and one being rejected}.

	Thus, at a general step in the algorithm, the situation is as in Figure \ref{casdebase}:
	
	\begin{figure}[h]
		\centering
		\includegraphics[width=0.6\textwidth]{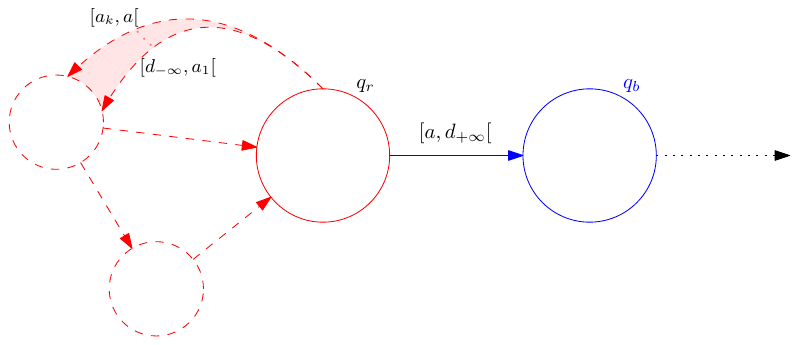}
		\caption{$q_b$ is the shortlex blue state, but is incoherent and needs to be split before being colored or merged. Dashed transitions and states are figurative.}
		\label{casdebase}
	\end{figure}
	
	The outgoing transitions from red state $q_r$ with guards up to $a$ have been identified, and the next transition to be identified out of $q_r$ is $(q_r,\inter{a}{b},q)$. Note that $a$ was identified when learning previous transitions\footnote{When $a = d_{-\infty}$, then the transition will be the first identified out of $q_r$: thus the lower bound will, correctly, be $d_{-\infty}$ and the rest of the reasoning is the same.}: the only things to be learned are the upper bound $b$ and the target $q$.
	At this point, $q_b$ cannot be merged with a red state nor colored blue, because the two aforementioned words would create an irreversible inconsistency: the transition leading to it should be split. Any split with value $b' \succ b$ would yield the same issue on the newly created $q_b$. Splits with value $b' \prec b$ would create no inconsistency, but the one with value $b$, as the last consistent one, is the one executed, identifying the upper bound of the transition. 
	\begin{figure}[h]
		\centering
		\includegraphics[width=0.6\textwidth]{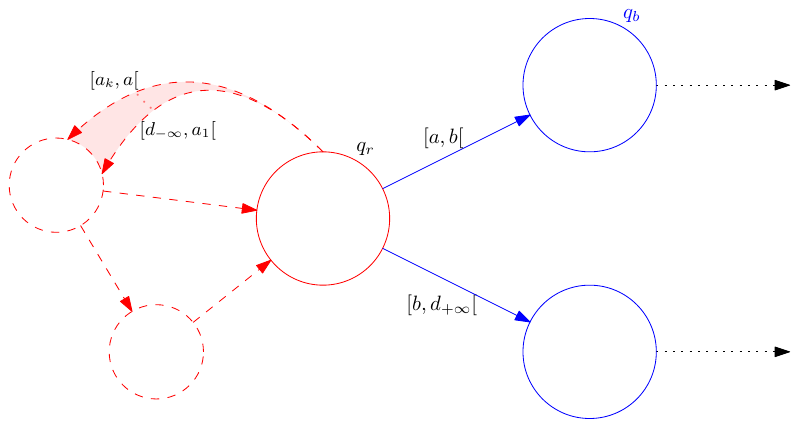}
		\caption{The situation after the split: the new shortlex blue state can be merged or colored consistently.}
		\label{casdebase2}
	\end{figure}

	Once this split is done (Fig. \ref{casdebase2}), the next step will identify the target of the transition. The newly created $q_b$ is not inconsistent, and merges with all known red states will be tried. However, due to the presence of $\tau_{q_r}\cdot a \cdot\delta_{q,q'}$ and $\tau_{q'}\delta_{q,q'}$ for all states $q' \neq q$, always with opposite labels (as, in $\A_t$, $\tau_{q_r}\cdot a$ reaches $q$ and $\tau_{q'}$ reaches $q'$), the only possible merge is with $q$, and if $q$ has not been colored red already, then no merge is possible and $q_b$ is colored red (thus becoming $q$ for all intents and purposes when identifying further transitions).

	Thus, we have proved that every transition of $\A_t$ will be correctly identified with this sample.

	As long as not all states are accepting or not all states are rejecting (in which case the minimal automaton has only one state and a charcteristic sample containing only $\varepsilon$), for each state $q$ there exists $q'$ such that $\delta_{q,q'} = \varepsilon$. This means that all access words are present in the sample, and ensures that all states of $\A_t$ are correctly identified as final if and only if they should be.

	The fixed order over operations, independently of the sample, also shows that even if more examples were added, this would not make \nomdelalgo return another automaton (assuming all words are correctly labeled according to $\L_t$), which concludes the proof. 
	
\end{proof}

\textit{Remark :} Note that the words $\tau_{q_1}\cdot a\cdot\delta_{q_2,q}$ appearing for all transitions with lower bound $a$ make it so that when a transition has guard $\inter{d_{-\infty}}{b}$, there will be words in the sample with $d_{-\infty}$ as a letter. When dealing with the interval algebra over \N\ , $d_{-\infty} = 0$ so this is not an issue. When dealing with \R\, this means some words will contain $-\infty$ as a letter, which might be inelegant or unpractical: however, finding the minimal lower bound $m$ of any guard and replacing every instance of $-\infty$ by $m-1$ (or similar) will allow us to do without it. But when dealing with a general monotonic algebra with no known structure, $d_{-\infty}$ will appear as a concrete letter.

		\label{Conclusionsect}
	 	\section{Comparison to Previous-Known Results}\label{comparison}
	We compare our results to Fisman \etal\ \cite{FFZ}, the, to the best of our knowledge, only previously known algorithm for passive learning of SFA. 
	With the same formalism as for \nomdelalgo's the characteristic sample for Fisman \etal's algorithm is:
	\[\bigcup_{\substack{a \in \D \text{ s.t.}\\ \exists(q_1,\inter{a}{b},q_2) \in \Delta_t}} \bigcup_{\substack{q \in Q_t \\ q' \neq q'' \in Q_t}} \tau_{q}\cdot a\cdot\delta_{q',q''} \cup \bigcup_{q,q' \in Q_t^2} \tau_q\delta_{q,q'} \]
	Intuitively, if $a$ appears as the lower bound of some transition, then there are words indicating at each state to which state reading $a$ leads (recall $\tau_{q}$ is the access word to $q$, and $\delta_{q',q''}$ is a word distinguishing $q'$ from $q''$); while we only ask that a letter appears after the access word for a state if it does appear as a lower bound of a transition from that state.
	In the worst case (if all lower bounds of transitions are distinct), Fisman \etal's CS contains $\left|\Delta\right|\left(\left|Q\right|-1\right)\left|Q\right|^2 + \left|Q\right|^2$ words. The one of \nomdelalgo instead has always $\left|\Delta\right|\left(\left|Q\right|-1\right) + \left|Q\right|^2$ words. In both, the words' lengths are linear in $\left|Q\right|$.
	Furthermore, the CS for \nomdelalgo is included in the one for Fisman \etal's algorithm. Therefore, if the latter is guaranteed to learn an automaton, \nomdelalgo also learns it, whereas the converse is false.

	For example, consider the sample of Figure \ref{exemple} with
        one $100$ replaced by $99$: $\{ (\varepsilon,-), (0,+), (100,-),(0\cdot 0,-),(0\cdot 99,+) \}$.
        %for learning an automaton over the interval algebra.
        \nomdelalgo handles it like before, the only difference being that outgoing transitions from the right state are guarded by $[0,99[$ and $[99,+\infty[$. Yet, in Fisman \etal's algorithm, that sole $99$ is removed by their \emph{decontaminating} function, so the DFA learner uses sample $\{ (\varepsilon,-), (0,+)$, $(100,-)$, $(0\cdot 0, -)\}$ and learns an automaton accepting $0$ and rejecting everything else. After being lifted back to symbolic automata, the automaton is not consistent with the sample, and thus, the algorithm returns the prefix-tree automaton for the sample 
		accepting only two words, thus not generalizing the sample.

	We also compare our results to the ones from Verwer \etal\ \cite{VWW}: the \emph{real-time automata} (RTA) their RTI algorithm learns are similar to SFA. They recognize words that have letters from a finite alphabet labeled with natural numbers corresponding to delays between letters. RTA over a single-letter alphabet are thus SFA over the natural interval algebra. \nomdelalgo thus generalizes RTI by using any monotonic algebra and not only \N: it could be adapted to recognize words over a finite alphabet with delays from any monotonic algebra by redefining SPTAs for this case. Also, RTI relies on the structure of \N\ by asking, when learning a transition with lower bound $n$, for both $n$ and $n-1$ to appear in the sample. The final main difference is that characteristic samples are not defined explicitly but by giving an algorithm to generate them: running RTI with an empty sample, then each time a transition not in the target automaton is learned, adding words to the sample that forbid this operation, and restarting. This makes the CS smaller than the ones we defined ---although \nomdelalgo would also learn the target automaton given RTI's CS---, but practically impossible to express explicitly.

\section{Conclusion and Future Work}\label{futurework}
	A way to generalize \nomdelalgo is to handle the non-monotonic case. As it is known \cite{FFZ} that SFAs over the propositional algebra cannot be efficiently identifiable unless $P=NP$, this can take various forms:
	\begin{enumerate}[nolistsep]
        \item defining a polynomial-time algorithm without polynomial-size characteristic samples,
		\item defining an algorithm that doesn't run in polynomial-time (with or without polynomial-size characteristic samples),
		\item finding sufficient and/or necessary conditions for an algebra for corresponding SFA to be efficiently identifiable.
	\end{enumerate}
	We expect the third option to be the most promising one, as results
        for active learning are similar: Argyros and D'Antoni's $\mathit{MAT}^*$ \cite{MATstar} showed that predicates of a boolean algebra being learnable in polynomial time using membership (of letters) and equivalence queries, is a sufficient condition for SFA over this algebra to be learnable in polynomial time using membership (of words) and equivalence queries by creating instances of the predicate learning algorithms. Similarly, we would like to define a framework that relies on passive predicate-learning algorithms to get automata-learning ones.

	Another generalization would be using the \textit{Evidence-Driven State Merging} (EDSM) framework, where operations do not happen in a fixed order, but a heuristic, called \textit{evidence value}, leads us to do ``the most productive'' one at every iteration  (see \cite{EDSMintro} for a presentation, and \cite{EDSMsurvey} for a survey of evidence values). Although it was defined in cases where the only operations were coloring and merges, results with splits are promising \cite{VWW}. \nomdelalgo is currently equivalent to an EDSM algorithm with evidence value such that consistent merges score higher than the coloring, who scores higher than splitting (all only of the shortlex blue state, any other operation scoring 0). The existence of characteristic samples for these evidence-based algorithms is not always guaranteed, but in practice they converge quickly towards small automata consistent with the sample.

	\subsubsection*{Acknowledgments}
	This work was partially supported by the ``France 2030'' government investment plan managed by ANR, under the reference ANR-23-PEIA-0006.

\bibliographystyle{splncs04} % LNCS: style
\bibliography{biblio}
\end{document}